\documentclass[12pt]{amsart}
\usepackage{fullpage} 
\usepackage[T1]{fontenc}
\usepackage[utf8]{inputenc}

\usepackage{amsfonts,amsmath,amsthm,amssymb}
\usepackage{enumitem} 
\usepackage{stix} 
\usepackage[backend=biber,style=numeric,hyperref=true,isbn=false,doi=false]{biblatex}
\addbibresource{refs.bib}
\newtheorem{theorem}{Theorem}
\newtheorem{lemma}[theorem]{Lemma}
\newtheorem{proposition}[theorem]{Proposition}
\newtheorem{fact}[theorem]{Fact}
\newtheorem{problem}[theorem]{Problem}

\theoremstyle{remark}
\newtheorem{remark}[theorem]{Remark}
\newtheorem{example}[theorem]{Example}
\newtheorem{definition}[theorem]{Definition}



\usepackage[norelsize,boxed,vlined]{algorithm2e}


\usepackage{tikz}
\usetikzlibrary{automata, arrows.meta, positioning}
\usetikzlibrary{shapes.multipart}
\usetikzlibrary{calc}

\title{Kunz languages for numerical semigroups are context sensitive}
\author{Manuel Delgado
\email{mdelgado@fc.up.pt}
\and
Jaume Usó i Cubertorer
\email{jaumeusoicu@gmail.com}
}
\address{CMUP--Centro de Matemática da Universidade do Porto\\
Departamento de Matemática, Faculdade de Ciências\\
Universidade do Porto\\
Rua do Campo Alegre s/n, 4169– 007 Porto, Portugal}
\thanks{The first author was partially supported by CMUP, a member of LASI, which is financed by national funds through FCT – Fundação
  para a Ciência e a Tecnologia, I.P., under the projects with reference UIDB/00144/2020 and UIDP/00144/2020. He also acknowledges the Proyecto de Excelencia de la Junta de Andalucía (ProyExcel 00868).}
\date{\today}

\begin{document}
\keywords{Numerical semigroup, Kunz languages, Chomsky hierarchy}
\subjclass[2010]{20M14, 68Q45}

\begin{abstract}
   There is a one-to-one and onto correspondence between the class of numerical semigroups of depth \(n\), where \(n\) is an integer, and a certain language over the alphabet \(\{1,\ldots,n\}\) which we call a Kunz language of depth \(n\). The Kunz language associated with the numerical semigroups of depth \(2\) is the regular language \(\{1,2\}^*2\{1,2\}^*\). We prove that Kunz languages associated with numerical semigroups of larger depth are context-sensitive but not regular.
\end{abstract}

\maketitle

\section{Introduction}\label{sec:introduction}
This paper exhibits an application of the area of formal languages in the area of numerical semigroups.

We assume that most experts in one of the referred areas are not experts in the other. This leads us to try to make the paper easily readable by non-experts.

Let \(\mathbb{N}=\mathbb{Z}_{\ge 0}=\{0,1,2,\ldots\}\) be the set of nonnegative integers.
A \emph{numerical semigroup} \(S\) is a subset of \(\mathbb{N}\) such that \(0\in S\), \(S\) is closed under addition and  \(\mathbb{N}\setminus S\) is finite.

Our motivation is to formalize in some way why certain problems in numerical semigroups may have different degrees of difficulty depending on the families of numerical semigroups under consideration.

Although not discussed in this article, one example of such a problem is the classical Frobenius problem: given relatively prime positive integers \(a_1,a_2,\ldots,a_n\) (which can be thought of as generators of a numerical semigroup), ﬁnd the largest integer, denoted \(F(a_1,a_2,\ldots,a_n)\), that is not representable as a nonnegative integer combination of \(a_1,a_2,\ldots,a_n\). This problem is easy to solve when \(n=2\); in this case there is a formula: \(F(a_1,a_2)=a_1a_2-a_1-a_2\). When at least \(3\) integers are needed to generate the numerical semigroup, there is no `closed' formula for the Frobenius number. Ramirez-Alfonsin's book~\cite{Ramirez-Alfonsin2005Book-Diophantine} contains these and numerous other results.
Problems in numerical semigroups tightly related to the Frobenius problem tend to be easy for numerical semigroups generated by two coprime integers and hard for those requiring more generators.
Also, various partial results for numerical semigroups generated by three integers have no correspondents known for numerical semigroups requiring more generators.

To a numerical semigroup, one can associate a word; thus, one can associate a formal language with a class of numerical semigroups. In this article, we will refer to certain classes of numerical semigroups and analyze their associated languages. Some of these are regular; some others are not even context-free.

Regular languages are objectively very simple, since they are accepted by ﬁnite state automata (the simplest abstract computer model, not even having memory) and a non-regular language is not.
We show that the languages considered are all context-sensitive (accepted by linear bounded automata).

Benefiting from the fact that we do not need to enter deep into any of the above-mentioned areas, we made the paper self-contained through a section that includes definitions, motivation and preliminaries. There we devote a subsection to recalling the basics of numerical semigroups. Some aspects of language theory are recalled next. Representing numerical semigroups by words is the subject of the third subsection of preliminaries. There we define Kunz languages as sets of words over \(\mathbb{N}\) satisfying some conditions.

The final section contains our results. 
We discuss where Kunz languages fit into the Chomsky hierarchy. 

\section{Preliminaries, definitions and motivation}\label{sec:preliminaries}
In this section we introduce the concepts, terminology, and notation to be used in the third section. At the same time, we explain the motivation for the research presented. Also, we include examples to illustrate the concepts introduced and prepare material to be used in the last section.

We divide the section into three subsections. The first contains the material on numerical semigroups needed. The second recalls, somewhat informally, the basics on formal languages. In the third subsection we define Kunz words, a concept already appearing in the literature, later used to define Kunz languages.
\subsection{Numerical semigroups}\label{subsec:numericalsgps}
  Recall that a numerical semigroup \(S\) is a subset of \(\mathbb{N}\) containing \(0\), closed under addition and such that \(\mathbb{N}\setminus S\) is finite.
 The number of elements of \(\mathbb{N}\setminus S\) is called the \emph{genus} of \(S\) and is denoted \(\operatorname{g}(S)\).
As a general reference on numerical semigroups, we refer to~\cite{RosalesGarcia2009Book-Numerical}. All results referred to as well-known are taken from there.

The study of numerical semigroups frequently involves investigating some notable elements and relations between them. Among the notable elements appearing in this text, there is the least positive element of the semigroup and the smallest element of the semigroup such that all larger integers are likewise elements of the semigroup.
These are called, respectively, the \emph{multiplicity} and the \emph{conductor} of the semigroup. For the matter of notation, for any given numerical semigroup \(S\), we shall use \(m=\operatorname{m}(S)=\min(S\setminus\{0\})\)  and \(c=\operatorname{c}(S)=\min \{x\in S\mid x+\mathbb{N}\subseteq S\}\), respectively, for the multiplicity and the conductor of \(S\). The argument \(S\) is usually understood and omitted when readability is unaffected.

For \(x\in\mathbb{R}\), we denote the \emph{ceiling} of \(x\), \(\min\{r\in \mathbb{Z}\mid x\le r\}\), by \(\lceil x \rceil\), as is usual.
Although not belonging to the numerical semigroup being studied, the \emph{Frobenius number} of \(S\), defined by \(\operatorname{F}(S)=\operatorname{c}(S)-1\), and the \emph{depth} of \(S\), defined as
  \(
  \left\lceil\frac{\operatorname{c}(S)}{\operatorname{m}(S)}\right\rceil
\), are also of great importance.
As \(\operatorname{c}(\mathbb{N})=0\) and  \(\operatorname{m}(\mathbb{N})=1>0\), the depth of \(\mathbb{N}\) is \(0\); all other numerical semigroups have positive depth.

\medskip

Let \(S\) be a numerical semigroup with multiplicity \(m\).
The \emph{Apéry set} of \(S\), with respect to its multiplicity, is
\[
\operatorname{Ap}(S)=\{s \in S \mid s-m \notin S\}.
\]
It is well known and easy to see that \(\operatorname{Ap}(S)\) has \(m\) elements, one per residue class modulo~\(m\). To be more precise, we can write \(\operatorname{Ap}(S)=\left\{0, a_1, \ldots, a_{m-1}\right\}\), where \(a_i=\min \{x \in S \mid x \equiv i \bmod m\}\), so that each \(a_i=k_i m+i\) for some positive integer \(k_i\) (\(0<i<m\)). Each of these \(k_i\) is called a \emph{Kunz coordinate of \(S\)} (see below for the reason of the terminology).
\medskip

It is also well-known that \(\operatorname{Ap}(S)\cup\{m\}\) generates \(S\). That is,
\[S=\{n_0m+n_1a_1+\cdots + n_{m-1}a_{m-1}\mid n_0,n_1, \ldots,n_{m-1}\in \mathbb{N}\}.\]
This shows, in particular, that numerical semigroups are finitely generated.

Sometimes the notation \(S= \langle m, a_1, \ldots, a_{m-1}\rangle\) is used, even when \(\operatorname{Ap}(S)\cup\{m\}\) is not a minimal set of generators.
It is well-known that a numerical semigroup \(S\) has a unique minimal set of generators. Its cardinality, called the \emph{embedding dimension} of \(S\), is denoted \(\operatorname{e}(S)\).

\medskip

Let \(S\) be a numerical semigroup with multiplicity \(m\). We write \(\mathcal{K}(S)=\left(k_1, k_2, \ldots, k_{m-1}\right)\) and use the convention \(\mathcal{K}(\mathbb{N}) = ()\). The (possibly empty) tuple \(\mathcal{K}(S)\)
is called the \emph{Kunz coordinates tuple} of \(S\). 

The sum of the Apéry elements \(a_i=k_i m+i\) and \(a_j=k_j m+j\), belongs to the congruence class of \(i+j\) modulo \(m\); in fact, one can write: \(a_i + a_j=(k_i+k_j)m+i +j=(k_i+k_j+1)m+i +j-m\). By filling in a few more details, it can be shown (see \cite{RosalesGarcia-SanchezGarcia-GarciaBranco2002JLMS2-Systems}) that an integer tuple \(\left(x_1, \ldots, x_{m-1}\right)\) is the Kunz coordinates tuple of a numerical semigroup with multiplicity \(m\) if and only if
\begin{align}\label{eq:kunz-inequalities}
	x_i & \geq 1 & \text{ for } 1 & \leq i \leq m-1, \nonumber\\
	x_i+x_j & \geq x_{i+j} & \text{ for } 1 & \leq i \leq j \leq m-1 \text{ with } i+j<m, \nonumber\\
    x_i+x_j+1 & \geq x_{i+j-m}& \text{ for } 1 & \leq i \leq j \leq m-1 \text{ with } i+j>m.
\end{align}

\medskip

The map \(\mathcal{K}\) is a bijection between numerical semigroups 
and Kunz coordinate tuples. Due to this bijection, which provides a (natural) representation of a numerical semigroup, we will sometimes identify a numerical semigroup with its Kunz coordinates tuple.
\medskip

Kunz coordinates play a fundamental role in various aspects of the theory of numerical semigroups. As an example, we point out that for any fixed $m$ the inequalities in (\ref{eq:kunz-inequalities}) define a polyhedron (which was introduced independently in \cite{Kunz1987Book-Uber} and \cite{RosalesGarcia-SanchezGarcia-GarciaBranco2002JLMS2-Systems}). Its points with integer coordinates are in a natural bijection (via the map \(\mathcal{K}\) defined above) with numerical semigroups with multiplicity~\(m\). This fact was explored in~\cite{BrunsGarcia-SanchezONeillWilburne2020IJAC-Wilfs} (respectively, \cite{KliemStump2022DCG-new} ) to show that there is no counterexample to a long-standing conjecture among numerical semigroups with multiplicity up to~\(18\) (respectively,~\(19\)). This conjecture, known as Wilf's conjecture, was raised as a question in \cite{Wilf1978AMM-circle}.
The question asks whether, for every numerical semigroup \(S\), \(\frac{\operatorname{g}(S)}{\operatorname{c}(S)}\) is at most \(1-\frac{1}{\operatorname{e}(S)}\).
  It appears in the literature in various equivalent forms and is nowadays an important topic of research in the area of numerical semigroups; see \cite{Delgado2020-survey} for a survey.

\bigskip

Let \(q = \operatorname{q}(S)\) be the maximum of the Kunz coordinates of \(S\). By definition, there exists an integer~\(r\) such that \(0<r<m\) and \(\max\operatorname{Ap}(S) = qm +r\). It is well known and easy to observe that \(\max\operatorname{Ap}(S) = \operatorname{F}(S)+m = \operatorname{c}(S)+m-1\). Thus \(qm+r=c+m-1\), which implies that \(q=\left\lceil\frac{c}{m}\right\rceil\). Therefore the depth of \(S\) is its maximum Kunz coordinate. 
\medskip

For a positive integer \(q\), denote by \(\mathcal{S}_q\) the class of numerical semigroups of depth~\(q\). 

\subsection{Formal languages}\label{subsec:formal-languages}
Here we quickly overview formal languages, grammars, and machines. The reader can find further details in any introductory book on the subject. For example, the book by Linz~\cite{Linz2011Book-Introduction} serves this purpose.
\medskip

Given a nonempty set of symbols \(\Sigma\), we denote by \(\Sigma^*\) the set consisting of all possible results of concatenating a finite number of symbols of \(\Sigma\), along with the empty word.
Then, a language \(L\) over an alphabet \(\Sigma\) is a subset of \(\Sigma^*\), whose elements are called \emph{words}. The number of symbols in a word \(w\) is called the \emph{length} of the word and is denoted \(|w|\). Sometimes, we will refer to all possible nonempty words over an alphabet, for which the notation \(\Sigma^+\) will be used.
\medskip

There are two essential ways to describe a language: giving a set of rules that explain how to build the strings in that language (grammars), or recognizing which words belong to the language and which do not (automata).
\medskip

A grammar \(G\) is defined as a quadruple \(G = (V,T,S,P)\) where \(V\) is a finite set of variables, \(T\) is a finite set of terminal symbols such that \(V\) and \(T\) are nonempty and disjoint; \(S\in V\) is the start variable and  \(P\) is a finite set of productions of the form \(x \to y\) (read `\(x\) produces \(y\)'),  where \(x\in (V \cup T)^+\) contains at least one variable and \(y \in (V \cup T)^*\).
\medskip

During the 1950s, Noam Chomsky introduced a hierarchy of four types of grammars, each with more restricted production rules than the previous one. This hierarchy is known as the \emph{Chomsky} or \emph{Chomsky-Schützenberger} hierarchy. As each grammar produces a language, this hierarchy of grammars translates to a hierarchy in languages, which we will use to classify numerical semigroups. 
These four classes of languages in Chomsky's hierarchy are, in increasing order of complexity: regular, context-free, context-sensitive and recursively enumerable.

\begin{definition}\label{def:ex-grammars}
    Let \(G = (V,T,S,P)\) be a grammar.
    \begin{enumerate}
        \item The grammar \(G\) is said to be \emph{right linear} if all its productions are of the form 
        \[A\to xB \; | \; x\]
        with \(A,B\in V\) and \(x\in T^*\);
        \item The grammar \(G\) is said to be \emph{context-free} if all its productions are of the form 
        \[A \to x\]
        with \(A\in V\), \(x\in (V\cup T)^*\). 
        \item The grammar \(G\) is said to be \emph{context-sensitive} if all its productions are of the form 
        \[xAy \to xvy\]
        with \(A\in V\), \(x,y\in ((V\cup T)\setminus \{S\})^*\) and \(v\in ((V\cup T)\setminus \{S\})^+\). 
    \end{enumerate}
\end{definition}
The name context-sensitive is justified by the fact that \(x\) and \(y\) form the context of \(A\): they determine whether \(A\) can be replaced with \(v\). Notice the contrast with context-free grammars, where no context is present: the left-hand side of a production is just a non-terminal.

If we have a production \(x \to y\) and a word \(w = uxv\), our production is applicable to \(w\) and lets us replace \(x\) by \(y\) in \(w\) obtaining the new word \(z = uyv\); this is denoted by \(w \Rightarrow z\).
A production can be used whenever it is applicable, and it can be applied as often as desired. If
\[w_1 \Rightarrow w_2 \Rightarrow \cdots \Rightarrow w_n\]
we write \[w_1 \overset{*}{\Rightarrow} w_n,\] the \(*\) indicating that an unspecified number of steps (including \(0\)) can be taken. 
The language defined by a grammar \(G\) is   \[L(G) = \{ w\in T^* | S \overset{*}{\Rightarrow} w \}.\] 

Right-linear grammars define \emph{regular languages}; \emph{context-free languages} are defined by context-free grammars and \emph{context-sensitive languages} are defined by context-sensitive grammars.
 
 The same classes of languages mentioned can be obtained using automata; each of these families of languages is recognized by some type of automata. The approach using automata is the one used in this paper, but in some cases using grammars is more natural.

 At the bottom of the Chomsky hierarchy, we have regular languages. These are languages recognized by deterministic finite accepters, or dfa, which are the simplest machines considered here.
 Deterministic finite accepters simply consist of a finite number of states and instructions about how to move between them. These automata have no memory. Frequently, dfa's are thought of as transition graphs.
 \begin{example}\label{ex:dfaK2}
    The transition graph depicted in Figure~\ref{fig:dfaK2} represents a dfa with two states (\(s_0\) is the initial state, and \(s_1\) is the final state) and its alphabet is \(\{1,2\}\). The words that can be read from the initial state to the final one are precisely those that contain the letter \(2\) at least once.
The language consisting of such words can be represented by the expression \(\{1,2\}^*2\{1,2\}^*\).
 \end{example}

\begin{figure}[h]
	\centering
    \begin{tikzpicture}[node distance = 3cm, on grid, auto]
      \node[state, initial, initial text = {}] (q0) {\(s_0\)};
    \node[state, accepting, right of=q0] (q1) {\(s_1\)};
        \draw[->] (q0) edge[above] node{2} (q1)
        (q1) edge[loop above] node{1,2} (q1)
        (q0) edge[loop above] node{1,2} (q0);
    \end{tikzpicture} 
	\caption{\label{fig:dfaK2} dfa recognizing \(\{1,2\}^*2\{1,2\}^*\).}
\end{figure} 
 
A step higher on the hierarchy, we find context-free languages, which are recognized by a type of automata known as nondeterministic pushdown automata or ndpa. These machines have a higher capability than dfa's because they have a memory (in the form of a stack).
 On top of these languages, we have recursively enumerable languages. Turing machines accept the languages in this family. And if instead of regular Turing machines, we look at the languages accepted by nondeterministic Turing machines with the tape limited to the input we get to context-sensitive languages. These automata are known as linear bounded automata or lba and play an essential role in this article. It is worth noting that as long as the tape is limited to a number of cells that depends linearly on the input, the kind of language defined is the same.

\subsection{Kunz words}\label{subsec:Kunz-words}
We shall translate the representation referred to in Section~\ref{subsec:numericalsgps} of numerical semigroups by tuples of integers into a representation by words over the alphabet \(\mathbb{N}_{\ge 1}\) of positive integers. 

This will lead us to associate a language to the class \(\mathcal{S}_q\) of numerical semigroups of depth \(q\) and then study it in the light of the classical theory of formal languages, namely where it fits in the Chomsky Hierarchy of languages.

The Kunz coordinate tuple \(\mathcal{K}(S)=\left(k_1, k_2, \ldots, k_{m-1}\right)\) associated with a numerical semigroup of depth \(q\) can be seen as a word \(\mathcal{W}(S) = k_1 k_2 \ldots k_{m-1}\) over the alphabet \(\{1,\ldots,q\}\).
This association already appears in the literature (see, for instance, \cite{Zhu2023CT-Sub}).
Notice that \(\operatorname{m}(S)=|\mathcal{W}(S)|+1\). 
The word \(\mathcal{W}(S)\) satisfies a number of relations imposed by the inequalities the Kunz coordinates tuple of \(S\) satisfies (see the inequalities~(\ref {eq:kunz-inequalities}) on page~\pageref{eq:kunz-inequalities}). This leads us to the following formal definition:

\begin{definition}\label{def:kunz-word}
  A word \(u=u_1\cdots u_{\ell}\in    (\mathbb{N}_{\ge1})^*\) is said a \emph{Kunz word} if
	\begin{align*}
		u_i+u_j & \ge u_{i+j}, \\
		u_i+u_j+1 & \ge u_{i+j-(\ell+1)},
	\end{align*}
	for all indices for which the above inequalities are defined.
	
	In this context, we say that the empty word is the \emph{\(0\)-Kunz word} and we say that a Kunz word \(u = u_1\cdots u_{\ell}\) is a \emph{\(q\)-Kunz word} if \(\max\{u_1,\ldots,u_{\ell}\}=q\).
\end{definition}

The inequalities in the above definition will be referred to as \emph{Kunz conditions}.

From the above discussion, we have that a numerical semigroup of depth \(q\) and multiplicity \(m\) corresponds, through \(\mathcal{W}\), to exactly one \(q\)-Kunz word of length \(m-1\). Notice that the \(0\)-Kunz word corresponds to \(\mathbb{N}\).  

It follows that the map \(\mathcal{W}\) is a bijection between the class of all numerical semigroups (respectively, numerical semigroups of depth \(q\)) and the class of all Kunz words (respectively, \(q\)-Kunz words). Thus \(\mathcal{W}\) provides a representation of numerical semigroups by words.

\begin{example}\label{ex:kunz-words}
  The set \(\{1,2\}^*\) consists of Kunz words, since a word consisting entirely of \(1\)s and \(2\)s satisfies the Kunz conditions, as the left members of the inequalities are at least \(2\).
\end{example}

\begin{lemma}\label{lemma:kunz-and-non-kunz-words}
	Let \(n,m\) and \(q\) be positive integers, with \(q\geq 3\). Then
	\begin{enumerate}[label = \textnormal{(\roman*)}]
	 \item the word 
  \( u = 1^n2^n3^n\dots (q-2)^n(q-1)^nq\)
  is \(q\)-Kunz;\label{itm:q-kunz}
	 \item the word 
  \( v = 1^{n+m}2^n3^n\dots (q-2)^n(q-1)^nq\) 
  is not Kunz.\label{itm:not-kunz}
	\end{enumerate}
\end{lemma}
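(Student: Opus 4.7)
The plan is to describe the letters of each word by an explicit ceiling-type formula, and then to verify (i) by invoking subadditivity of the ceiling function and to establish (ii) by exhibiting one concrete pair $(i,j)$ that breaks the first Kunz inequality.

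For (i), set $\ell = n(q-1)+1$. A short check shows that $w_i = \lceil i/n \rceil$ for every $1 \le i \le \ell$: the blocks of $k$'s occupy positions $(k-1)n+1,\dots,kn$ for $k=1,\dots,q-1$, and the final letter $q$ sitting at position $\ell$ satisfies $\lceil \ell/n \rceil = q$. Under this formula the first Kunz condition $w_i + w_j \ge w_{i+j}$ (for $i+j \le \ell$) is just the standard subadditivity $\lceil i/n \rceil + \lceil j/n \rceil \ge \lceil (i+j)/n \rceil$. For the second Kunz condition, put $k = i+j-(\ell+1)$: since $\ell+2 \le i+j \le 2\ell$ one has $1 \le k \le \ell - 1$, whence $w_k \le q-1$; on the other hand $w_i + w_j \ge \lceil (i+j)/n \rceil \ge \lceil (\ell+2)/n \rceil \ge q$, so $w_i + w_j + 1 \ge q+1 > w_k$.

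For (ii), the word has length $\ell' = (q-1)n + m + 1$; the crucial observation is that the enlarged 1-block now occupies positions $1,\dots,n+m$, while the 3-block begins at position $2n+m+1$. I would take $i = n+m$ and $j = n+1$: both lie in the 1-block, so $w_i = w_j = 1$, and the sum $i+j = 2n+m+1$ satisfies $i+j \le \ell'$ because $q \ge 3$. The letter at position $2n+m+1$ either opens the 3-block (when $q \ge 4$) or coincides with $\ell'$ and carries the terminal letter $q = 3$ (when $q = 3$); in both cases $w_{i+j} = 3 > 2 = w_i + w_j$, violating the first Kunz inequality.

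The only real insight is spotting the violating pair $(n+m, n+1)$ in (ii); the underlying idea is that a single extra $1$ at the front lets the sum $i+j$ leapfrog the whole 2-block and land in the 3-block, which subadditivity of the ceiling forbids. Everything else reduces to routine ceiling arithmetic, so I do not anticipate any genuine obstacle.
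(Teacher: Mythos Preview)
Your proof is correct and follows essentially the same route as the paper: the ceiling formula $w_\kappa=\lceil \kappa/n\rceil$ plus subadditivity for part~(i), and the same violating pair $\{n+1,\,n+m\}$ for part~(ii). If anything, you are slightly more careful than the paper, since you explicitly verify $i+j\le \ell'$ and handle the edge case $q=3$ separately.
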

\begin{proof}
	Denote by
  \( u_{\kappa} \,(\text{respectively, } v_{\kappa})\)
 the \(\kappa\)th letter of the word
  \( u \,(\text{respectively, } v)\).  
	
 We now prove part~\ref{itm:q-kunz}.
  It is clear that \(\lceil x \rceil+\lceil y \rceil\ge\lceil x +y \rceil\), for any \(x,y\in\mathbb{R}\).\\
 Note that \(|u| = (q-1)n+1\). Observe also that \(u_{\kappa}=\left\lceil\frac{\kappa}{n}\right\rceil\) for all indices~\(\kappa\). It follows that if \(i\) and \(j\) are indices such that \(1\le i+j \le (q-1)n+1\), then
\[u_i+u_j=\left\lceil\frac{i}{n}\right\rceil+\left\lceil\frac{j}{n}\right\rceil\ge \left\lceil\frac{i+j}{n}\right\rceil = u_{i+j},\]	
thus \(u\) satisfies the first Kunz condition.
		
		If \(i\) and \(j\) are such that \(i+j > (q-1)n+1\), then 
		\[\left\lceil\frac{i}{n}\right\rceil+\left\lceil\frac{j}{n}\right\rceil\ge\left\lceil\frac{i+j}{n}\right\rceil \ge \left\lceil\frac{(q-1)n+1}{n}\right\rceil = q-1 + \left\lceil\frac{1}{n}\right\rceil = q.\] 
		It follows that \(u\) also satisfies the second Kunz condition.
			
	Next we prove part \ref{itm:not-kunz}.
        If we take \(i= n + 1\) and \(j = n+m\), we have that \(v_i +v_j = 2\), but \(v_{i+j} =v_{2n + m +1} = 3\), so \(v\) does not meet the first condition to be a Kunz word.
\end{proof}

\begin{definition}
	Let \(q\ge 0\) be an integer.
	We define the \emph{\(q\)-Kunz language} (sometimes referred to as \emph{Kunz language of depth~\(q\)}) as the set of \(q\)-Kunz words over the alphabet \(\{1,\ldots,q\}\). We denote this language by \(K_q\).
	Using the language of sets, we can write:
	\begin{multline*}
   	 	K_q=\left\{u=u_1\cdots u_r\in \{1,\ldots,q\}^*\mid q\in\{u_1,\ldots,u_r\}, u_i+u_j\ge u_{i+j} \text{ and } u_i+u_j+1\ge u_{i+j-(|u|+1)},\right.\\
    	\left.\text{ for all indices for which these inequalities are defined}\right\}.
	\end{multline*}

	We say that a language is a \emph{Kunz language} if it is a \(q\)-Kunz language for some nonnegative integer~\(q\).
\end{definition}

The map \(\mathcal{W}\) gives a bijection between the set of numerical semigroups of depth \(q\) and the \(q\)-Kunz language.
\medskip

Whether Kunz languages have some interesting properties is a question we will try to answer positively in this article.
  As very little can be done with languages for which we cannot find efficient membership algorithms, before proceeding, we exhibit (although without efficiency concerns) a membership algorithm for Kunz Languages.   

\begin{fact}
  Any Kunz language has a solvable membership problem.
\end{fact}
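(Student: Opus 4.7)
The plan is to give an explicit decision procedure that, given a candidate word $w = u_1 \cdots u_\ell$ over $\mathbb{N}_{\geq 1}$ together with an integer $q \geq 0$, verifies the clauses in the definition of $K_q$ one by one. Since the definition is a finite conjunction of effectively checkable arithmetic constraints on the letters of $w$, this immediately yields decidability.

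First I would scan $w$ in a single linear pass to confirm that every letter lies in $\{1, \ldots, q\}$ and that $q$ itself appears at least once; if either fails, reject. Next I would loop over all pairs of indices $(i,j)$ with $1 \leq i \leq j \leq \ell$, checking, for pairs with $i + j \leq \ell$, the first Kunz condition $u_i + u_j \geq u_{i+j}$ and, for pairs with $i + j > \ell + 1$ (so that $i + j - (\ell + 1) \geq 1$ is a valid index), the second Kunz condition $u_i + u_j + 1 \geq u_{i+j-(\ell+1)}$. The algorithm accepts $w$ exactly when all of these checks succeed.

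No step here constitutes a genuine obstacle: the defining clauses of $K_q$ are of bounded arithmetic depth, there are only $O(\ell^2)$ of them, and each can be verified in constant time on a reasonable model of computation. Hence the procedure terminates in polynomial time in $|w|$, showing not merely that the word problem for every Kunz language is decidable but that it lies in $P$. The genuinely delicate questions about Kunz languages therefore lie not in the computability of membership but, as pursued in the next section, in locating these languages within the Chomsky hierarchy.
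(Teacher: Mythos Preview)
Your proposal is correct and follows essentially the same approach as the paper: both exhibit an explicit decision procedure that loops over all index pairs and checks the Kunz inequalities. You are in fact slightly more thorough, since you also verify that every letter lies in $\{1,\ldots,q\}$ and that $q$ actually occurs, whereas the paper's Algorithm~\ref{alg:word-problem-for-Kunz-languages} only tests the Kunz conditions themselves; but this is a minor elaboration, not a different idea.
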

\begin{proof}
  Algorithm~\ref{alg:membership-problem-for-Kunz-languages} is a naive algorithm that tests whether a word over \(\mathbb{N}_{\ge1}\) is a \(q\)-Kunz word.
\end{proof}

\begin{algorithm}[ht]  \caption{Algorithm to test whether a word is \(q\)-Kunz.\label{alg:membership-problem-for-Kunz-languages}}
\KwData{\(q\in\mathbb{N}, v=v_1\cdots v_{\ell}\in (\mathbb{N}_{\ge1})^*\)}
\KwResult{\textsf{true} if \(v\) is a \(q\)- Kunz word and \textsf{false} otherwise}

\If{\(|v|=0 \text{ and } q=0\)}{\KwRet{true};}
\If{\(\max\{v_i\mid i\le |v|\}\ne q \)}{\KwRet{false};}

\(pairs = \{(i,j)\in\{1,\ldots,|v|\}^2\mid i\le j\}\)

\For{\((i,j)\in pairs\)}{
    \If{\(i+j\le |v|\)}{
        \If{\(v_i+v_j < v_{i+j}\)}{
            \KwRet{false}\;
        }
    }
    \Else{
        \If{\(v_i+v_j +1< v_{i+j - (|v|+1)}\)}{
            \KwRet{false}\;
        }
    }
}
\KwRet{true}
\end{algorithm}

\section{Kunz languages}\label{sec:Kunz-languages}
Along this section, we aim to prove that \(q\)-Kunz languages are regular if \(q<3\) (Proposition~\ref{prop:k0k1k2regular}). and they are not when \(q\ge 3\) (Proposition~\ref{prop:3_not_regular}).
  In addition, we prove that for \(q\ge 5\) the language \(K_q\) is not context-free (Proposition~\ref{prop:K5_not_context-free}).
  Moreover, we will observe that for \(q\ge 3\) the language \(K_q\) is context sensitive (Proposition~\ref{prop:K3-context-sensitive}).
  Also, we leave two open problems: Problem~\ref{problem:is_K3_context-free}, asking whether \(K_3\) or \(K_4\) are context-free, and Problem~\ref{prob:give-classes-of-languages}, asking for the consideration of classes of languages between context-free and context-sensitive and their correspondence with classes of semigroups of distinct depths.
  
    Our results suggest that problems in numerical semigroups of depth up to \(2\) should be more accessible than the same problems for numerical semigroups of higher depth.
This seems to be the case, as we illustrate next by means of examples related to counting numerical semigroups and to Wilf's conjecture.

For example, Zhao~\cite{Zhao2010SF-Constructing} proved that the number of numerical semigroups \(S\) with genus \(g\) and depth up to \(2\) is the Fibonacci number \(F_{g+1}\) (where the Fibonacci sequence is considered starting in \(0\)). That the number of numerical semigroups \(S\) with genus \(g\) and depth up to \(3\) behaves asymptotically like the Fibonacci sequence was later proved by Zhai~\cite{Zhai2012SF-Fibonacci}, and the proof is quite more involving. See an article by Kaplan~\cite{Kaplan2017AMM-Counting} for a survey on the theme, with particular emphasis on the above-mentioned results. 
	As another example, we refer to the fact that it is known that Wilf's conjecture holds for numerical semigroups of depth up to \(2\) since 2012 (the result was obtained by Kaplan~\cite{Kaplan2012JPAA-Counting}), but only six years later has been obtained the analogous result for numerical semigroups of depth up to \(3\), a result due to Eliahou~\cite{Eliahou2018JEMS-Wilfs}.

It is reasonable to expect a growth in the difficulty of problems in classes of numerical semigroups as the depth grows. Still, our results do not provide a distinction in complexity between the languages \(K_q\) as long as \(q\ge 3\). To support our expectation, we refer to the fact that it is known that Wilf's conjecture holds for numerical semigroups of depth \(3\), but the proof is considerably more involved than the one for numerical semigroups of depth \(2\). It is still unknown whether Wilf's conjecture holds for numerical semigroups of depth \(4\).

As our results do not distinguish between certain languages (thus between the corresponding classes of numerical semigroups), it is natural to consider intermediate classes of languages between context-free and context-sensitive and their correspondence with classes of semigroups of distinct depths as a future research topic.

\begin{problem}\label{prob:give-classes-of-languages}
    Consider intermediate classes of languages between context-free and context-sensitive and their correspondence with classes of semigroups of distinct depths.
\end{problem}

\begin{fact}
  \begin{enumerate}
  \item  \(K_0\) consists of the empty word;
    \item \(K_1=\{1\}^+\);
    \item \(K_2=\{u2v\mid u,v\in \{1,2\}^*\}=\{1,2\}^*2\{1,2\}^*\)
\end{enumerate}
\end{fact}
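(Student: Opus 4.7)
The plan is to prove both statements by direct verification of the Kunz conditions, exploiting the fact that when the alphabet is $\{1,\dots,q\}$ with $q\le 2$, the left-hand side of each Kunz inequality is automatically at least as large as the right-hand side.

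For part (1), I would observe that a $1$-Kunz word $u = u_1\cdots u_\ell$ uses only letters from $\{1\}$ and must contain the maximum letter $1$, so $u\in \{1\}^+$. Conversely, given any $u\in \{1\}^+$, both Kunz conditions reduce to $1+1\ge 1$ and $1+1+1\ge 1$, which hold trivially; hence every element of $\{1\}^+$ is a $1$-Kunz word.

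For part (2), I would argue similarly: a $2$-Kunz word lies in $\{1,2\}^*$ and must contain at least one occurrence of $2$, so it lies in $\{1,2\}^*\,2\,\{1,2\}^*$. For the converse, take any $u=u_1\cdots u_\ell\in \{1,2\}^*2\{1,2\}^*$. Then each $u_i\in\{1,2\}$, so for any indices $i,j$ with $i+j\le \ell$ we have
\[
u_i+u_j\;\ge\;2\;\ge\;u_{i+j},
\]
verifying the first Kunz condition, and for any indices with $i+j>\ell$ we have
\[
u_i+u_j+1\;\ge\;3\;>\;2\;\ge\;u_{i+j-(\ell+1)},
\]
verifying the second. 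Since $u$ contains a $2$, it is a $2$-Kunz word.

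I do not expect any serious obstacle here: the whole point is that for $q\le 2$ the Kunz conditions are vacuous because the sum of any two positive integers is already at least $2$, which is the largest possible right-hand side. The content of the statement is really that the alphabet restriction plus the presence of the maximum letter $q$ are the only constraints left, which matches the regular expressions claimed.
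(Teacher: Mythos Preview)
Your proof is correct and follows essentially the same approach as the paper: both argue that for words over $\{1,2\}$ the left-hand sides of the Kunz inequalities are at least $2$ while the right-hand sides are at most $2$, so the conditions hold automatically, and the reverse inclusions are immediate from the definition of a $q$-Kunz word. You simply spell out the verification in slightly more detail than the paper does.
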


    The reader familiar with formal languages readily observes that the preceding fact exhibits regular expressions for \(K_0\), \(K_1\) and \(K_2\), thus showing that these languages are regular.
    We state this fact next, and give an alternative proof, to keep the paper self-contained.
\begin{proposition}\label{prop:k0k1k2regular}
    The Kunz languages \(K_0\), \(K_1\) and \(K_2\) are regular.
\end{proposition} 
\begin{proof}
  The right linear grammar with the production \(S\to\varepsilon\), where \(\varepsilon\) is the empty word, is \(K_0\). Thus, \(K_0\) is regular. 
  The language \(K_1\) is regular, since it is the language of the right linear grammar with the production \(S\to 1\, S\).
    That the language \(K_2\) is regular follows from Example~\ref{ex:dfaK2}, which shows that it is a language accepted by a dfa.
\end{proof}

The language \( K_3 \) is more complex. It does not belong to the class of languages at the bottom of the Chomsky hierarchy, as stated in the following proposition. 
In the proof we make use of Lemma~\ref{lemma:kunz-and-non-kunz-words}, which provides examples of Kunz words and examples of non-Kunz words.
\begin{proposition} \label{prop:3_not_regular}
	Let \(q\ge 3\) be an integer. Then \(K_q\), the Kunz language of depth \(q\) is not regular. 
\end{proposition}

\begin{proof}
    For the sake of contradiction, let us suppose that \(K_q\) is a regular language. Then, there has to be a dfa \(M\) that recognizes it. Let \(n\) be the number of states of such dfa.\\
    The word \(w = 1^n2^n3^n\dots (q-2)^n(q-1)^nq\) is a Kunz word of depth \(q\) so it has to be accepted by \(M\) but since \(M\) has only \(n\) states, while reading the first \(n\) \(1\)s of \(w\), \(M\) has to come at least twice to the same state \(s_1\) due to the pigeonhole principle. In other words, while reading the \(1\)s in \(w\), a cycle \(s_1\rightarrow s_2 \rightarrow \cdots \rightarrow s_i \rightarrow s_1\) of length \(i\) is walked. Consequently, any word adding any multiple of \(i\) \(1\)s will also be accepted. Then for \(i>0\) the word \(1^{n+i}2^n3^n\dots (q-2)^n(q-1)^nq\) is accepted by \(M\), which is a contradiction since \(1^{n+i}2^n3^n\dots (q-2)^n(q-1)^nq\) is not a Kunz word.
\end{proof}
  The pigeonhole argument used in the above proof can be used to prove the classical Pumping lemma for regular languages, which can be paraphrased as follows:
  every sufficiently long string in an infinite regular language \(L\) can be broken into three parts in such a way that an arbitrary number of repetitions of the middle part yields another string in \(L\). We say that the middle string is ``pumped'', hence the term pumping lemma for this result.

The above-mentioned Pumping lemma for regular languages has many versions and generalizations (including one for context-free languages). We successfully used one of them, by Bader and Moura~\cite{bader-moura}, to prove that Kunz languages of depth greater than \(4\) are not context-free (Proposition~\ref{prop:K5_not_context-free}).
On the other hand, we did not succeed in using any of the available generalizations to prove that \(K_3\) is not context-free. We leave the following open problem:
  \begin{problem}\label{problem:is_K3_context-free}
Is any, or both, of the languages \(K_3\) or \(K_4\) context-free?
    \end{problem}
Before stating the above-mentioned Bader-Moura's result we need to introduce some notation.
\\
Given a language \(L\) and a word \(w\in L\) we may label some positions (letters) in \(w\) as \textit{distinguished} and some others as \textit{excluded}. A particular position can be both distinguished and excluded or maybe neither. When labelling positions of \(w\), we will denote \(d(u)\) (resp. \(e(u)\)) as the number of distinguished (resp. excluded) positions in a substring \(u\) of \(w\).

\begin{theorem}\cite{bader-moura} \label{th:Bader-Moura}
For any context-free language \(L\), there exists \(p\in \mathbb{N}\) such that for every \(w\in L\), if \(d(w)\) positions are labeled as distinguished and \(e(w)\) are labeled as excluded with \(d(w)>p^{e(w) + 1}\), then there exist \(u,v,x,y,z\) such that \(w = uvxyz\) and:\begin{enumerate}
    \item \(d(vy)\geq 1\) and \(e(vy) = 0\).
    \item \(d(vxy) \leq p^{e(vxy)+1}\).
    \item \(uv^\kappa xy^\kappa z \in L\) for every \(\kappa \geq 0\).
\end{enumerate}
\end{theorem}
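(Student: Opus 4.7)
The plan is to prove this strengthened pumping lemma (the Bader--Moura theorem, which subsumes Ogden's lemma as the case \(e(w)=0\)) by analyzing parse trees of a Chomsky normal form grammar for \(L\). First I would convert a CFG for \(L\) into Chomsky normal form and let \(k\) be the number of nonterminals, setting \(p = 2^k\) (up to a small constant absorbed into the exponent). Given \(w \in L\) with \(d(w) > p^{e(w)+1}\), fix a parse tree \(T\) of \(w\); every internal node has two children. I call a root-to-leaf path in \(T\) a \emph{trunk} if at each internal node it descends to the child whose subtree contains the larger number of distinguished descendants, with ties broken arbitrarily. Since the root has at least one distinguished descendant, such a trunk terminates at a distinguished leaf.

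On the trunk, call an internal node \emph{branching} if the off-path (sibling) subtree contains at least one distinguished position, and call a branching node \emph{clean} if, in addition, its off-path subtree contains no excluded positions. The technical heart of the proof is the counting claim that if a subtree has \(e'\) excluded leaves and strictly more than \(p^{e'+1}\) distinguished leaves, then along its portion of the trunk there are at least \(k+1\) clean branching nodes. I would prove this by induction on \(e'\): each excluded leaf, lying in one off-path subtree along the trunk, can be charged to a single non-clean branching node, and the choice \(p = 2^k\) is calibrated so that dividing the distinguished pool at such an event still leaves a subtree whose distinguished-to-excluded ratio meets the inductive hypothesis.

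With \(k+1\) clean branching nodes along the trunk, pigeonhole gives two such nodes that carry the same nonterminal label \(A\). Let \(T_1 \supset T_2\) be their subtrees; writing \(w = uvxyz\) so that \(T_1\) yields \(vxyz\) and \(T_2\) yields \(xyz\) gives the standard decomposition for which grafting \(T_2\) in place of \(T_1\), or iterating the \(A\)-to-\(A\) derivation, produces \(uv^\kappa xy^\kappa z \in L\) for every \(\kappa \ge 0\), establishing (3). Condition (1) holds because both chosen nodes are branching (so at least one of \(v,y\) carries a distinguished position) and clean (so neither \(v\) nor \(y\) contains an excluded position). For (2), I would select the pair of equal-labeled nodes to be the \emph{deepest} such pair along the trunk; then the subtree \(T_1\) contains at most \(k\) clean branching nodes below its root, so the contrapositive of the counting claim applied to \(T_1\) gives \(d(vxy) \le p^{e(vxy)+1}\).

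The main obstacle is the inductive counting argument in the second paragraph: keeping the invariant and the constant \(p = 2^k\) aligned so that the exponent \(e(w)+1\) is exactly right (rather than, say, \(e(w)+2\)) requires care in how distinguished descendants are redistributed between the trunk-side and off-path-side at each branching event, and in how the ``charge'' of a single excluded leaf is accounted for. Once that bookkeeping is fixed, conditions (1)--(3) and the extraction of the decomposition \(w=uvxyz\) are routine consequences of CNF parse-tree surgery.
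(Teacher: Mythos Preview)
The paper does not prove this statement at all: Theorem~\ref{th:Bader-Moura} is quoted as a known result (the Bader--Moura strengthening of Ogden's lemma) and is used as a black box in the proof of Proposition~\ref{pr:l3_not_context-free}. So there is nothing in the paper to compare your argument against.

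That said, your outline is the standard and correct route to the Bader--Moura theorem: Chomsky normal form, a greedy trunk path toward distinguished leaves, the distinction between branching and \emph{clean} branching nodes, an inductive count showing that \(d(w)>p^{e(w)+1}\) forces at least \(k+1\) clean branching nodes on the trunk, and then pigeonhole plus the usual tree surgery. One small slip: in your description of the decomposition you write that \(T_1\) yields \(vxyz\) and \(T_2\) yields \(xyz\); in fact the subtree \(T_1\) rooted at the upper repeated nonterminal yields \(vxy\), and the subtree \(T_2\) rooted at the lower one yields \(x\), with \(u\) and \(z\) lying outside \(T_1\). This matters because condition~(2) is a bound on \(d(vxy)\), i.e.\ on the yield of \(T_1\), and your argument for~(2) via ``at most \(k\) clean branching nodes below the root of \(T_1\)'' is correct precisely when \(T_1\) is identified with the subword \(vxy\). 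With that correction, and provided you carry out the inductive bookkeeping you flag in the last paragraph, the proof goes through.
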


\begin{proposition}\label{prop:K5_not_context-free}
For \(q\geq 5\) the Kunz language of depth \(q\) is not context-free.
\end{proposition}

\begin{proof}
We will argue by contradiction. Suppose that \(K_q\), the Kunz language of depth \(q\), is context-free. Then by Theorem~\ref{th:Bader-Moura}, there exists a \(p\in \mathbb{N} \) satisfying the conditions of the theorem. Let \(n = p^{q} +1\). By Lemma~\ref{lemma:kunz-and-non-kunz-words}, the word \(\; w = 1^n2^n3^n\dots (q-2)^n(q-1)^nq \;\) belongs to \(K_q\).\\
  Now, we label all the positions with the \(1\)s as distinguished and for every \(2\leq a \leq q\) we label the first appearance of \(a\) in \(w\) as excluded.
Next we can see \(w\) represented with the distinguished positions in green and the excluded ones in red.

\[w= \underbrace{\textcolor{green}{11\cdots1}}_{n} \textcolor{red}{2} \underbrace{22\cdots 2}_{n-1}\textcolor{red}{3}\underbrace{33\cdots 3}_{n-1}\textcolor{red}{4}4\cdots(q-2)\textcolor{red}{(q-1)}\underbrace{(q-1)(q-1)\cdots (q-1)}_{n-1}\textcolor{red}{q} \]
We have \(d(w)= n = p^{q} +1\) and \(e(w) = q-1\), hence \(d(w) = p^{q} +1 > p^{q} = p^{e(w) + 1}\). As a consequence, the three statements of Theorem~\ref{th:Bader-Moura} must hold.
\\
So we will have \(w = uvxyz\) where \(v y\) contains at least one \(1\). Since by the first statement \(e(vy) = 0\) we have that neither \(v\) nor \(y\) contain two different letters, and at least one of them has to be completely made of \(1\)s. So either \(vxy = 1^k\) for some \(k > 0\) or \(v= 1^k\) and \(y = a^s\), for some \(k > 0\), \(2\le a < q\) and \(s\ge 0\).\\
In the case \(vxy = 1^k\), by the third statement of Theorem~\ref{th:Bader-Moura} we could pump \(v\) and \(y\) and get \({1^{n+t}2^n3^n\dots (q-2)^n(q-1)^nq}\) which is not Kunz, by the proof of Lemma~\ref{lemma:kunz-and-non-kunz-words}.\\
In the other case, \(v= 1^k\) and \(y = a^s\), we have two options. The first option is that \(y\) consists entirely of \(2\)s. If this is so, we could pump \(y\) until getting two indices \(i,j\) such that \(w_i,w_j \leq 2 \) but \(w_{i+j} \geq 5\).\\
If \(y\) is of the form \( a^{s}\) for some \(a\geq 3\) and \(s\ge 1\), we could pump \(v\) sufficiently until getting two indices \(i,j\) such that \(w_i=w_j=1\) but \(w_{i+j}\ge 3\).
\end{proof}

\begin{remark}
    Note that in Proposition \ref{prop:K5_not_context-free} in order for the argument to work, the condition \(q\geq 5\) is necessary. This is because in the case \(v = 1^m\) and \(y=2^m\) the word \(uv^\kappa xy^\kappa z\) would still be Kunz. That is, pumping a lot of \(2\)s gets to a non Kunz word in the case of \(K_5\) because \(2+2 < 5\) but it would not make any difference in the cases of \(K_3\) and \(K_4\) since \(2+2 \geq 4\), and the condition \(w_i+w_j\geq w_{i+j}\) would still hold.
\end{remark}

Next, we will see that \(K_3\) is context-sensitive by giving an lba capable of accepting it. Since the machine required for this task is pretty complex, we will not give a formal definition; instead, a high-level description of the automaton will be provided. In this high-level description, we will specify some steps that an lba can do, hence describing its behaviour but not its states and moves. Then, similarly, we will give an idea of how to do the same thing for any Kunz language \(K_n\) for \(n\in \mathbb{N}\) and, as a consequence, prove that all Kunz languages are context-sensitive. Full details will appear in the second author's master thesis~\cite{UsoiCubertorer2023MThesis-Kunz}.

\begin{proposition}\label{prop:K3-context-sensitive}
The Kunz language of depth \(q=3\) is context-sensitive.
\end{proposition}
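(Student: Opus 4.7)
The plan is to exhibit a linear bounded automaton $M$ accepting $K_3$; since the context-sensitive languages are precisely those recognized by lba's, this will suffice. At a high level, $M$ will simulate Algorithm~\ref{alg:word-problem-for-Kunz-languages} specialized to the alphabet $\{1,2,3\}$ while staying within the input cells. On an initial left-to-right sweep, $M$ verifies that every input symbol lies in $\{1,2,3\}$ and that the letter $3$ occurs at least once, rejecting otherwise.

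To verify the Kunz inequalities, $M$ uses the standard multi-track encoding: the tape alphabet is enlarged so that each cell carries its original letter together with up to four positional markers $M_i, M_j, M_k$ and a scratch counter, at no extra cost in space. The outer loop enumerates all pairs $(i,j)$ with $1\le i\le j\le \ell$, where $\ell$ is the input length, by moving $M_i$ one cell at a time from cell $1$ to cell $\ell$ and, for each position of $M_i$, sweeping $M_j$ from the $M_i$ cell to cell $\ell$. For the inner check, $M$ computes the comparison index $k$ as follows: $M_k$ is initialized at the $M_i$ cell and then shifted one cell to the right for each step that a scratch pointer takes from cell $1$ up to the $M_j$ cell, so that $M_k$ lands on cell $i+j$. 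If during this process $M_k$ leaves the right end of the tape, $M$ realizes the wraparound $k\mapsto k-(\ell+1)$ required by the second Kunz condition: the ``virtual cell $\ell+1$'' corresponds to index $0$ and is simply skipped (no comparison), while further steps wrap back to cells $1,2,\ldots$. Once $M_i, M_j, M_k$ are in place, the machine reads the three marked letters and, since each lies in $\{1,2,3\}$, decides with its finite control whether the applicable inequality $u_i+u_j\ge u_k$ or $u_i+u_j+1\ge u_k$ holds.

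If any pair $(i,j)$ produces a violation, $M$ rejects; if every pair is processed without a violation, $M$ accepts. All auxiliary information consists of a bounded number of markers written on the input cells, so $M$ is linearly bounded, and the correctness of the check follows directly from the definition of $K_3$. The only part requiring care is the bookkeeping for the wraparound arithmetic when $i+j > \ell$, which will be the main technical obstacle; this is routine but tedious, and the fully formal construction is spelled out in~\cite{UsoiCubertorer2023pp-Kunz}.
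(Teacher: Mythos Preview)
Your proposal is correct and follows the same overall strategy as the paper: build an lba that simulates Algorithm~\ref{alg:word-problem-for-Kunz-languages} on a multi-track tape, using positional markers to enumerate index pairs and compute the target position, all within linear space.

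The one substantive difference is that the paper exploits a simplification specific to $q=3$ which you do not use. Over $\{1,2,3\}$ the second Kunz inequality $u_i+u_j+1\ge u_{i+j-(\ell+1)}$ is automatic (the left side is at least $3$), and the first inequality $u_i+u_j\ge u_{i+j}$ can fail only when $u_i=u_j=1$ and $u_{i+j}=3$. Hence the paper's lba never performs the wraparound arithmetic at all and only scans for pairs of $1$s whose sum-index carries a $3$, using five tracks with unary indices rather than your marker scheme. Your construction instead implements the full two-condition check with wraparound; this costs you the extra bookkeeping you flag as ``routine but tedious,'' but it has the advantage of working verbatim for every $K_q$, which is exactly what the paper sketches separately at the end.
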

\noindent  \textbf{Idea of an lba for \(\mathbf{K_3}\).}
We give a high-level description of an lba with each cell of the tape divided into five parts, called tracks, that accepts \(K_3\). The first track contains the input, and the second and third tracks will be used to store in unary notation indices of \(1\)s in the word \(w\). The fourth track is used to eventually compute and store the result of the sum of the indices of the previous tracks. Finally, the last one just stores the length of the word \(w\) in unary notation. The amount of tape used will be inferior to \(6\ell\) where \(\ell = |w|\) is the length of the input word. So at the very beginning, the first track will be \(\square w_1w_2\cdots w_\ell \square \cdots \square\) and the other tracks will be completely filled with blanks. The first blank right before \(w_1\) will remain untouched since it marks the left end of the tape. We could substitute it with a symbol distinguishing the left end of the tape, such as the symbol \([\) used by some authors to define an lba. The same happens with the rightmost \(\square\); it serves as a \(]\); we just used blanks to simplify the notation.
\\
Notice that in order for a word \(w = w_1w_2\cdots w_n\in \{1,2,3\}^*\) to be Kunz it is necessary and sufficient that \(w_i + w_j \geq w_{i+j}\). To contradict this fact, the only possible way is that \(w_i = w_j = 1\) and \(w_{i+j} = 3\). Then in order for \(w\) to be in \(K_3\) there is the additional condition that \(3\in w\).\\
The algorithm that the machine will execute is the following: first, check that the word contains at least one \(3\). Then from the beginning of the word going left to right, find the first \(1\), cross it, write its position (\(i\)), check if \(w\) has a \(3\) in position (\(i+i\)), go back to the \(1\) we just crossed, and check for every other \(1\) in position \(j\) after the \(x\) if the word contains a \(3\) in position \(i+j\). Once all other \(1\)s are verified for this first \(1\), do the same thing with the others.\\

The machine is described using steps. These steps are numbered and followed in order unless a step requires to move to another step that is not the one after. 
It is an easy exercise to verify that these are all within the capabilities of a lba.

\subsection*{Steps}
\begin{enumerate}
\item 
  Go all the way up until the end of the word to check if there is at least one \(3\) and go back to the beginning of the tape. The word is rejected if no \(3\) is found.
While going forward, at every move, put a \(1\) in the cell on the fifth track, in order to store the length of the word in that track.
    \item Go right until the first \(1\) is found and at every step right, write a \(1\) in the same cell but in the second and third tracks. If there are not any \(1\)s, accept the word. If a \(1\) is found, change it by an \(x\) (cross the \(1\)) and still write \(1\)s in the second and third tracks. \emph{At this point, we will have the one we are about to check marked with an \(x\) and its index \(i\) written in the second and third tracks.}
    \item  On the fourth track, compute and record \(i+i\) (in unary). Then go to the position with that index (\(2i\)) and check if it is a \(3\), if so, reject the word; otherwise (or in the case \((i+i)\) exceeds the length of the word) go back left to the first \(x\). 
    \item Go right until the next \(1\) is found. While doing it, at each step write on the same cell but in track \(3\) a \(1\). If there is not any other \(1\), accept the word. In case there is another \(1\), change it to a \(y\) and write a \(1\) on track 3. At this point, we have marked with a \(y\) the \(1\) we are about to check along with the \(1\) we already marked with an \(x\). Since the second track remained untouched, it still contains \(i\), the index of \(x\), while on the third track, we have \(j\), the index of \(y\) in unary notation.
    \item On the fourth track, compute and store the sum of the two indices (\(i + j\)).
    \item Go to the position \(i+j\) on the first track and check its value. If it is a \(3\), reject the word; otherwise (or in the case \((i+j)\) exceeds the length of the tape), go back to the first \(y\) found to the left.
    \item 
    Look for the next \(1\); if it is found, change it by a \(y\) and go to \textbf{step 5}. If there is not any other \(1\) on the tape, change all \(y\) on the tape back to \(1\), position the head of the tape on the rightmost \(x\) and go to \textbf{step 3}. 
    While going left to find the rightmost \(x\), it is important to clear every 1 in the third track the head goes through, in order to have the index of that \(x\) on both the second and third tracks.
\end{enumerate}

What we have done previously is essentially describe the technicalities needed to implement Algorithm~\ref{alg:membership-problem-for-Kunz-languages} in an lba. Another way to visualize and explain the machine needed to describe \(K_3\) is through a transition graph like the one shown in Figure~\ref{fig:lbaK3}.

\begin{figure}[ht]
  \centering
  \includegraphics[width=0.80\textwidth]{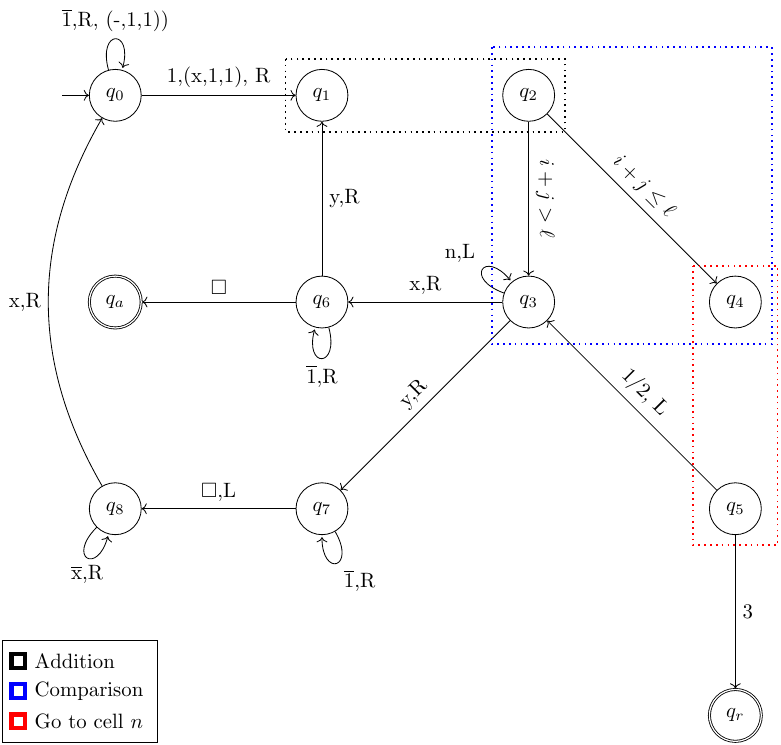}

	\caption{\label{fig:lbaK3} lba that accepts \(K_3\).}
\end{figure} 
In Figure \ref{fig:lbaK3}, the black dotted rectangle represents the subroutine of adding two numbers in unary notation. This corresponds to the input in the second and third tracks. So we shall imagine that when the machine enters state \(q_1\), it computes the addition and returns the output in track four, halting in the state \(q_2\). At this point, we call a second subroutine, the comparison of numbers. This subroutine is marked with the blue square and contains three states: \(q_2,q_3\) and \(q_4\). After computing the addition of indices, this comparing subroutine checks whether the result is greater than \(\ell\) (the length of the word \(w\)) and halts in either \(q_3\) or \(q_4\). Finally, in red, we represent the subroutine of going to a specific position in the word indexed by a natural number in unary notation. This corresponds to the addition of two indices \(i + j\) in which there are \(1\)s, in the case that the sum does not exceed \(\ell\). 
\begin{remark}
    In Figure \ref{fig:lbaK3}, \((-,1,1)\) means leave the first track untouched and write a \(1\) in the second and third tracks, and similarly, \((x,1,1)\) stands for the same, but writing an \(x\) on the first track. Then, a bar over a symbol such as \(\overline{x}\) means anything but that symbol, in this particular case, anything except for an \(x\). Then, in the case of the state \(q_3\), while moving left, it is not written on the graph, but the second and third tracks should be cleared as specified in step 7. 
\end{remark}

The version for \(K_3\) makes it much more simple since we only need to check one of the Kunz conditions and moreover just observe positions with \(1\)s and \(3\)s. 

Dealing with \(K_n\) for \(n>3\) requires more effort but, with the same idea, one can implement a full version of Algorithm~\ref{alg:membership-problem-for-Kunz-languages} on an lba. 

The most important fact to take into account is that given \(n\geq 3\), the computations needed to run the algorithm to decide whether or not a word \(w\) is in \(K_n\) consist of simple addition, subtraction and comparison of natural numbers and moving a specific number of cells within the tape. All these can easily be implemented on a Turing machine, and it is not hard to notice that the amount of tape needed depends linearly on the length of the input. As we are working on an alphabet \(1,2,\dots,n\), the values of symbols in the word will not exceed \(n\), and the values of indices will not exceed \(\ell\) the length of the word, so using \(2n\ell\) cells is enough.

\subsection*{Acknowledgements}
The authors would like to thank Pedro García-Sánchez for his very useful comments. 
The authors also thank the reviewer for her/his comments and suggestions.
\printbibliography
\end{document}